\newtheorem{theorem}{Theorem}
\tikzstyle{startstop} = [circle, minimum width=2cm, minimum height=1cm,text centered, text width=1cm, draw=black, fill=red!30]
\tikzstyle{process} = [rectangle, rounded corners, minimum width=0.5cm, minimum height=1cm, text centered, text width=3cm, draw=black, fill=orange!30]
\tikzstyle{decision} = [diamond, minimum width=2.5cm, minimum height=0.5cm, text centered, text width=2.5cm, draw=black, fill=green!30]
\tikzstyle{arrow} = [thick,->,>=stealth]
\begin{document}

\title{Multivariate Extreme Value Theory Based Rate Selection for Ultra-Reliable Communications\\
}

\author{Niloofar~Mehrnia,~\IEEEmembership{Member,~IEEE,}
        Sinem~Coleri,~\IEEEmembership{Fellow,~IEEE}%

\thanks{Niloofar Mehrnia and Sinem Coleri are with the Department of Electrical and Electronics Engineering, Koc University, Istanbul, Turkey (e-mail: nmehrnia17@ku.edu.tr; scoleri@ku.edu.tr).}
\thanks{Niloofar Mehrnia is also with Koc University Ford Otosan Automotive Technologies Laboratory (KUFOTAL), Sariyer, Istanbul, Turkey.}
\thanks{Sinem Coleri acknowledges the support of Ford Otosan, and the Scientific and Technological Research Council of Turkey 2247-A National Leaders Research Grant $\#121C314$.}
}

\maketitle

\begin{abstract}
Diversity schemes play a vital role in improving the performance of ultra-reliable communication (URC) systems by transmitting over two or more communication channels to combat fading and co-channel interference.
Determining an appropriate transmission strategy that satisfies the ultra-reliability constraint necessitates the derivation of the statistics of the channel in the ultra-reliable region and, subsequently, integration of these statistics into the rate selection while incorporating a confidence interval to account for potential uncertainties that may arise during estimation. In this paper, we propose a novel framework for ultra-reliable real-time transmission considering both spatial diversities and ultra-reliable channel statistics based on multivariate extreme value theory (MEVT). First, the tail distribution of the joint received power sequences obtained from different receivers is modeled while incorporating the inter-relations of extreme events occurring rarely based on the Poisson point process approach in MEVT. The optimum transmission strategies are then developed by determining the optimum transmission rate based on the estimated joint tail distribution and incorporating confidence intervals (CIs) into the estimations to cope with the availability of limited data. Finally, the system reliability is assessed by utilizing the outage probability metric. 
Through analysis of the data obtained from the engine compartment of the Fiat Linea, our study showcases the effectiveness of the proposed methodology in surpassing traditional extrapolation-based approaches. This innovative method not only achieves a higher transmission rate, but also effectively addresses the stringent requirements of ultra-reliability.
The findings indicate that the proposed rate selection framework offers a viable solution for achieving a desired target error probability by employing a higher transmission rate and reducing the amount of training data compared to the conventional rate selection methods.
\end{abstract}

\begin{IEEEkeywords}
Multivariate extreme value theory, URLLC, wireless channel modeling, spatial diversity, rate selection, confidence interval, ultra-reliable communication, $6$G.
\end{IEEEkeywords}

\section{Introduction}
\label{sec:intro}
Ultra-reliable communication (URC) plays a critical role in fifth-generation (5G) and beyond networks, enabling mission-critical applications in transportation, manufacturing, and healthcare. Examples of these applications include remote control of robots, remote surgery, wireless sensor networks within vehicles, autonomous vehicles, and vehicular teleoperation \cite{interface_01}-\nocite{5G_01}\nocite{interface_02}\nocite{urllc_01}\cite{urllc_02}. In current 5G standards, ultra-reliability and low latency are typically combined to form ultra-reliable low latency communication (URLLC). However, there are specific scenarios, such as health monitoring or disaster recovery, where the highest level of reliability is essential, but a slightly higher latency of more than the conventional $1$ millisecond (ms) is acceptable \cite{urllc_02}, \cite{urllc_05}. To achieve the ultra-reliability target of $10^{-9}$ to $10^{-5}$ packet error rate (PER), diversity schemes including time diversity \cite{timediv}, frequency diversity \cite{frequencydiv}, spatial diversity \cite{spacediv}, and interface diversity \cite{interferencediv} have been proposed. These schemes aim to reduce the required signal-to-noise ratio (SNR) to meet the stringent reliability constraints. Spatial diversity, which involves using multiple transmit and/or receive antennas to achieve reliable communication over fading channels, is preferred over frequency and interface diversities due to its cost-effectiveness and minimal transmission delay compared to time diversity \cite{urllc_diversity_01}. However, implementing ultra-reliable communication relies on accurately modeling extreme quantiles that occur infrequently. Therefore, significant advancements in statistical modeling of multiple channels and the development of corresponding transmission strategies are necessary for the successful utilization of spatial diversity in designing ultra-reliable communication systems.

In a URC system with spatial diversity, a proper channel model can be derived by i) estimating the channel statistic within a more nuanced notation of coherence 
distance for ultra-reliable communication \cite{urllc_08}, ii) using machine learning data-driven frameworks \cite{reliability_01}, or iii) incorporating novel techniques to the characteristics associated with extreme events across multiple variables, such as proposing the utilization of power law expression techniques by extrapolating from channel data \cite{urllc_02} and multivariate extreme value theory (MEVT) \cite{Mehrnia_twcBGPD}.
In \cite{urllc_08}, adjustments to the coherence distance definitions are made to accurately capture the distance at which wireless channels can be reliably predicted. The authors employ a quasi-static Rayleigh fading model, assuming spatial independence, to study channel dynamics in the context of URC.
Furthermore, \cite{reliability_01} addresses the estimation of channel-blocking incidents with probabilities on the order of $10^{-9}$-$10^{-5}$ using data-driven methods, without prior knowledge of dynamic channel statistics. This allows for the assessment of wireless connectivity reliability in ultra-reliable communication systems. However, machine learning data-driven approaches require a significant amount of training samples, typically exceeding a factor of ($10 \times \epsilon^{-1}$) to achieve the desired error probability $\epsilon$.
Additionally, \cite{urllc_02} proposes a power law framework that extends traditional channel models into the ultra-reliable region through extrapolation. This framework can be further enhanced to accommodate multiple receivers by employing a simplified expression of maximum ratio combining (MRC).
Nevertheless, these initial studies still rely on average statistics channels, such as Gaussian, Rayleigh, or Rician, to characterize channel behavior in the ultra-reliable region. Only recently, a novel methodology utilizing MEVT has been introduced in \cite{Mehrnia_twcBGPD} to model multiple channels in a multiple-input multiple-output (MIMO)-URC system by deriving the lower tail statistics of received power. MEVT is a statistical discipline specializing in formulating methods for modeling dependencies among infrequent events by leveraging multidimensional boundary relationships.
Despite numerous studies in the field of channel modeling, none have proposed a transmission strategy that relies on real-time assessment of the lower tail statistics of received power across multiple diversities.

The design of communication systems for ultra-reliable communications has been addressed in the context of adaptive relay selection techniques \cite{urllc_07}, and rate selection frameworks \cite{urllc_05}, \cite{reliability_03}, \cite{MehrniaTVTRate}. A cooperative communication-based relaying scheme has been proposed in \cite{urllc_07} to provide ultra-high reliability while meeting the latency requirements at a moderate signal-to-noise ratio (SNR) by modifying the definition of traditional coherence time and revising fading dynamics of wireless channels in the context of ultra-high reliability. On the other hand, \cite{urllc_05} uses a rate selection framework based on the extrapolation of average statistic-based channel models to design an ultra-reliable system. Moreover, \cite{reliability_03} develops a data-driven rate selection framework where federated learning (FL) training has been used to estimate the transmission rate and assess the reliability of the system with minimum training time in the absence of knowledge about the channel state information (CSI). However, all of these designs use the extrapolation of average-statistics channel models, which have been demonstrated not to fit the empirical data in the ultra-reliable region \cite{Mehrnia_twcBGPD}, \cite{MehrniaTVTRate}-\nocite{MehrniaTWC}\cite{MehrniaTVT}. 
We have recently proposed a novel extreme value theory (EVT)-based framework in \cite{MehrniaTVTRate} for estimating and validating the optimal transmission rate to address the constraints of ultra-reliable communications for a single transmitter-receiver pair. Nevertheless, none of these studies incorporate the statistics of multiple channels in the ultra-reliable region based on real data into the design of communication systems.

The estimation accuracy of the channel parameters and the transmission strategies need to be derived in the ultra-reliable regime due to the limited amount of training data by using confidence interval (CI) \cite{reliability_01}, \cite{confidenceinterval_03}, \cite{MehrniaEucnc} and bootstrapping techniques \cite{MehrniaTWC}. In \cite{reliability_01}, the authors propose a multi-layer perceptron (MLP) neural network (NN)-based machine learning data-driven framework incorporating CI to the statistics of a non-blocking connectivity duration to maintain the URLLC with the penalty of large training samples to satisfy the target error probability $\epsilon$. 
In \cite{confidenceinterval_03}, the authors evaluate the outage probability within a confidence level to address the reliability of the URC system by incorporating a non-parametric statistical learning algorithm into a rate selection framework assuming extrapolation of the average statistics fading channels. Only recently, in \cite{MehrniaEucnc}, we proposed a pioneering framework based on EVT that estimates the optimal transmission rate with a high degree of confidence using a reduced number of training samples. The proposed approach was further validated by evaluating the outage probability for ultra-reliable communications in a single transmitter-receiver scenario. 
On the other hand, in \cite{MehrniaTWC}, we develop a bootstrapping-based algorithm for the determination of the stopping conditions in collecting sufficient samples to estimate the tail statistics based on the normality assumptions of the return levels. Accordingly, if the sample size is sufficiently large to estimate the channel tail distribution of values exceeding a predetermined threshold by using the generalized Pareto distribution (GPD), the corresponding return levels obtained in the bootstrap iterations are normally distributed.
However, none of these studies incorporate ultra-reliable communication statistics of the diversities into the accuracy of simultaneously estimating the channel parameters and resulting transmission strategies.

The main objective of this study is to propose a transmission strategy for a real-time URC system that employs spatial diversity by considering the MEVT-based statistics of multiple channels while incorporating confidence intervals into parameter estimation to cope with the limited availability of data. The spotlight is on two-dimensional or bi-variate instances, with the aim of illustrating key concepts and concerns related to MEVT while avoiding the added complexity associated with a comprehensive multivariate approach. However, this methodology can be easily generalized to multiple dimensions.
According to the proposed methodology, the parameters of multiple channels are first estimated based on MEVT. Then, the optimal transmission rate is determined by incorporating the estimated multiple channel parameters while considering the ultra-reliability constraint subject to the target error probability $\epsilon$. Finally, the reliability of a system operating at the determined transmission rate is assessed by employing the outage probability. The uncertainty of the transmission rate is also obtained in relation to the confidence intervals of the estimated channel parameters. The original contributions of the paper are listed as follows:

\begin{itemize}
    \item We propose a novel framework for ultra-reliable real-time transmission considering both spatial diversities and ultra-reliable channel statistics based on MEVT, for the first time in the literature. The framework includes the modeling of the inter-relationship of the tail distributions of multiple channels by using the bi-variate GPD (BGPD) based on the Poisson point process approach, determination of the optimum transmission rate by using the estimated BGPD model, incorporating confidence intervals to the estimated transmission rate due to the restricted amount of data, and then assessment of the system reliability by utilizing the outage probability metric. 
    \item We propose a novel algorithm for the rate selection in a multi-diversity system based on the Poisson point process approach, for the first time in the literature. The algorithm incorporates the parameters of the probability measure function and BGPD into the optimum transmission rate to satisfy the ultra-reliability constraints.
    \item We propose an algorithm for the determination of the confidence interval of the transmission rate based on the confidence intervals of the estimated Uni-variate GPD (UGPD) parameters, for the first time in the literature. We determine the confidence intervals of the UGPD parameters by incorporating the bootstrapping-based bias-corrected accelerated (BCA) method that offers an order of magnitude improvement in accuracy for the parameter estimation using the maximum likelihood estimator (MLE). 
    \item We utilize a combination of one omnidirectional transmitter and two directional receivers to capture the data within the engine compartment of a Fiat Linea vehicle under different driving scenarios and engine vibrations. Through our proposed methodology, we achieve superior modeling accuracy for URC compared to conventional extrapolation-based models that rely on average statistical channels for multi-channel modeling.
    
\end{itemize}

The rest of the paper is organized as follows: Section \ref{sec:system_model} describes the system model and assumptions considered throughout the paper. Section~\ref{sec:framework} presents the MEVT-based rate selection framework for determining the optimum transmission rate and its affiliated confidence interval for a system operating in URC. Section \ref{sec:numerical_results} provides the channel measurement setup and the performance evaluation in determining the optimum transmission rate and outage probability, as well as the confidence interval for the estimated rate. Finally, concluding remarks and future works are given in Section \ref{sec:conclusions}.

\section{System Model}
\label{sec:system_model}
We consider a one-way communication in which a transmitter (Tx) sends data packets to two receivers, namely Rx$1$ and Rx$2$, at the total rate $R$, addressing the requirements of ultra-reliable communication. The framework can be easily extended for the scenario in which more than two receivers exist without loss of generality. Assuming a fixed and predetermined transmit power, estimating the received signal power is tantamount to determining the squared amplitude of the channel state information \cite{urllc_05}, \cite{MehrniaTWC}.

At the training phase and before the transmission starts, the channel samples are collected from receivers Rx$1$ and Rx$2$ and converted into $n$ independent and identically distributed (i.i.d.) sample sequences denoted by $X^{n}$ and $Y^{n}$, respectively, by applying declustering method \cite{MehrniaTWC}, \cite{evt_04}. The channel data in the training phase can be collected by using either the pilot signals or the records from the prior data transmissions. The recommended amount of required samples in the training phase is about $1/\epsilon$, where $\epsilon$ is the target error probability on the order of $10^{-9}$-$10^{-5}$ for the URC system.
Suppose the channel is stationary according to the Augmented Dickey-Fuller (ADF) test results. In that case, the channel tail distribution is modeled by applying EVT to the received signal powers and estimating the parameters of the UGPD fitted to the channel tail distribution. Otherwise, external factors leading to variation over time in the parameters of the UGPD are identified and utilized to partition the sequence into stationary groups of size $M$. Then, EVT is applied to each stationary sequence for estimating the shape and scale parameters of UGPD as a change-point function of time, as explained in detail in \cite{MehrniaTVT}. The parameters of UGPDs obtained for the received powers of Rx$1$ and Rx$2$ are then mutually used to determine the tail distribution of the joint probability distribution of multiple channel sequences $X^n$ and $Y^n$ by using MEVT techniques to characterize the statistics of the inter-relationships of extreme events \cite{Mehrnia_twcBGPD}. The transmitter assumes that the main source in the block fading channel is link outage \cite{urllc_02}, \cite{urllc_05}, \cite{urllc_noise}.

Upon deriving MEVT-based channel model, transmission strategies are determined to estimate the optimum transmission rate with the goal of fulfilling a predetermined $\epsilon$ reliability constraint such that

\begin{equation}
\label{eqn:pfRxn}
    p_F(R(X^n,Y^n)) \leq \epsilon,
\end{equation}
where $R(X^n,Y^n)$ denotes the transmission rate estimated based on the $n$ training received power samples of $X^n$ and $Y^n$, $F$ is the joint CDF of the training samples from channel data $X^n$ and $Y^n$, and $p_F(R(X^n,Y^n))$ is the outage probability at transmission rate $R(X^n,Y^n)$ defined as

\begin{equation}
\label{eqn:outageprob}
    p_{F}(R(X^n,Y^n)) = P\big[ R(X^n,Y^n)>\log_2(1+Z) \big],
\end{equation}
where $Z$ is any received power from test samples of receiver Rx$1$ or Rx$2$.
The transmitter determines the optimum rate as a function of $F$ as follows:
\begin{flalign}
\label{eqn:eoutage}
   R(F) &= sup\big\{ R(X^n,Y^n) \ge 0 : p_{F}(R(X^n,Y^n)) \le \epsilon\big\}\\
   &= \log_{2} \big(1+F^{-1}(\varepsilon_n)\big),
\end{flalign}
where $F^{-1}(\varepsilon_n)$ is the $\varepsilon_n$-quantile of $F$, and $\varepsilon_n$ is determined as a function of $\epsilon$ and the parameters of $F$ with the goal of satisfying constraint (\ref{eqn:pfRxn}).

Upon determining the optimal transmission rate using the proposed MEVT rate selection methodology at target error probability $\epsilon$, the confidence intervals of the estimated UGPD parameters are computed, considering some uncertainty in the parameter estimation by incorporating the probability of wrong decision $\alpha$. Subsequently, the statistics of transmission rate are estimated using MEVT, which incorporates the confidence intervals of the estimated UGPD parameters to satisfy the target error probability for a system equipped with spatial diversity operating in the ultra-reliable communication domain.

\section{MEVT-based Rate Selection Framework}
\label{sec:framework}
The goal of the MEVT-based rate selection framework is to determine an optimum transmission rate along with the confidence interval based on the estimation of the lower tail statistics of multiple channels for real-time communication in the ultra-reliable regime.
The proposed methodology involves a series of steps, including the conversion of the received power sample sequences into independent and identically distributed (i.i.d) samples. This is achieved through the application of the declustering method, which removes interdependency between the samples.
Then, the inter-relationship of bi-variate extremes is modeled by utilizing BGPD based on the MEVT-Poisson point process approach. Upon determining the optimum transmission rate based on the proposed MEVT-based rate selection framework, channel reliability is assessed based on the results of outage probability. Finally, the confidence interval is incorporated into the proposed rate selection framework by determining confidence intervals of the estimated parameters and the estimated optimum transmission rate for different values of training sample number $n$ to specify the minimum required training samples for optimal estimation of the transmission rate. 
The applied algorithm for the proposed rate selection framework incorporating CI is depicted in Fig.~\ref{fig:MEVT_rate_diagram} and explained in detail next.

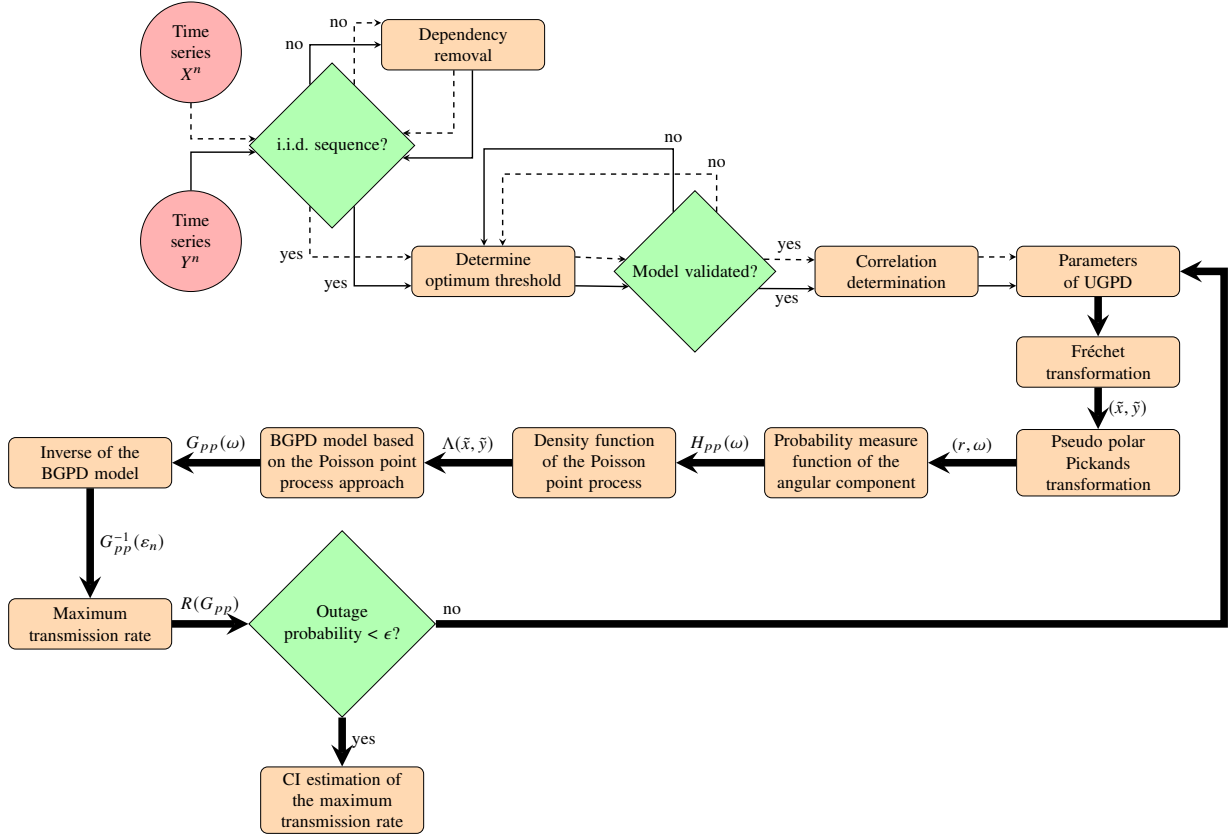
\begin{figure*}
    \centering

\scalebox{0.67}{
\begin{tikzpicture}[node distance=2.5cm]
\label{fig:maindiagram}

\node (start1) [startstop] {Time series $X^n$};
\node (start2) [startstop, below of=start1 , yshift=-1.25cm] {Time series $Y^n$};
\node (dec1) [decision, right of=start1, xshift=0.3cm, yshift=-1.85cm] {i.i.d. sequence?};
\node (pro1) [process, right of=dec1, yshift=2cm, xshift=0.1cm] {Dependency removal};
\node (pro3) [process, right of=dec1, yshift=-2.5cm, xshift=0.7cm] {Determine optimum threshold};
\node (dec3) [decision, right of=pro3, xshift=1.5cm] {Model validated?};
\node (pro4) [process, right of=dec3, xshift=1.5cm] {Correlation determination};
\node (pro5) [process, right of=pro4, xshift=1.5cm] {Parameters of UGPD};
\node (pro6) [process, below of=pro5, yshift=0.7cm] {Fr\'echet transformation};
\node (pro12) [process, below of=pro6, yshift=0.5cm] {Pseudo polar Pickands transformation};
\node (pro14) [process, left of=pro12, xshift=-2.5cm] {Probability measure function of the angular component};
\node (pro15) [process, left of=pro14, xshift=-2.5cm] {Density function of the Poisson point process};
\node (pro16) [process, left of=pro15, xshift=-2.5cm] {BGPD model based on the Poisson point process approach};
\node (pro17) [process, left of=pro16, xshift=-2.5cm] {Inverse of the BGPD model};
\node (pro18) [process, below of=pro17, yshift=-0.7cm] {Maximum transmission rate};
\node (dec4) [decision, right of=pro18, xshift=2.5cm] {Outage probability < $\epsilon$?};
\node (pro19) [process, below of=dec4, yshift=-1.0cm] {CI estimation of the maximum transmission rate};

\draw [arrow, dashed] (start1) |- (dec1.175);
\draw [arrow] (start2) |- (dec1.185);
\draw [arrow, dashed] (dec1.70) |- node[anchor=east] {no} (pro1.165);
\draw [arrow] (dec1.110) |- node[anchor=east] {no} (pro1);
\draw [arrow, dashed] (dec1.250) |- node[anchor=east] {yes} (pro3.170);
\draw [arrow] (dec1.290) |- node[anchor=east] {yes} (pro3.190);
\draw [arrow, dashed] (pro1.250) |- (dec1.10);
\draw [arrow] (pro1.290) |- (dec1.350);
\draw [arrow, dashed] (pro3.10) -- (dec3.170);
\draw [arrow] (pro3.350) -- (dec3.193);
\draw [arrow, dashed] (dec3.70) |- node[anchor=south] {no} ++(0,0.75) -|  (pro3.70);
\draw [arrow] (dec3.110) |- node[anchor=south] {no} ++(0,1.25) -|  (pro3.110);
\draw [arrow, dashed] (dec3.10) -- node[anchor=south] {yes} (pro4.172);
\draw [arrow] (dec3.345) -- node[anchor=north] {yes} (pro4.192);
\draw [arrow, dashed] (pro4.10) -- (pro5.170);
\draw [arrow] (pro4.350) -- (pro5.190);
\draw [arrow, line width=1.5mm] (pro5) -- (pro6);
\draw [arrow, line width=1.5mm] (pro6) -- node[anchor=west] {$(\tilde{x},\tilde{y})$} (pro12);
\draw [arrow, line width=1.5mm] (pro12) -- node[anchor=south] {($r,\omega$)} (pro14);
\draw [arrow, line width=1.5mm] (pro14) -- node[anchor=south] {$H_{pp}(\omega)$} (pro15);
\draw [arrow, line width=1.5mm] (pro15) -- node[anchor=south] {$\Lambda(\tilde{x},\tilde{y})$} (pro16);
\draw [arrow, line width=1.5mm] (pro16) -- node[anchor=south] {$G_{pp}(\omega)$} (pro17);
\draw [arrow, line width=1.5mm] (pro17) -- node[anchor=west] {$G^{-1}_{pp}(\varepsilon_n)$} (pro18);
\draw [arrow, line width=1.5mm] (pro18) -- node[anchor=south] {$R(G_{pp})$} (dec4);
\draw [arrow, line width=1.5mm] (dec4) -- node[anchor=west] {yes}  (pro19);
\draw [arrow, line width=1.5mm] (dec4) -- node[anchor=south] {no} ++(2.5,0) -- ++(15,0) |- (pro5);

\end{tikzpicture}
} 
    \caption{Flow diagram of the proposed MEVT-based rate selection framework.}
    \label{fig:MEVT_rate_diagram}
\end{figure*}

\subsection{Bi-variate Channel Estimation}
\label{sec:framework_channel}

At the training phase, the channel samples are collected from receivers Rx$1$ and Rx$2$ and converted into i.i.d. samples through the declustering approach. In this approach, the samples are folded into multiple clusters, and then the minimum of each cluster is extracted; the minima are considered i.i.d. samples \cite{MehrniaTWC}, \cite{evt_04}. 
Upon applying EVT to the resulting sequence of i.i.d. samples from individual channel sequences, the optimum thresholds are determined separately by applying two complementary methods based on EVT: mean residual life (MRL) and parameter stability methods \cite{MehrniaTWC}, \cite{evt_04}. 
According to the MRL method, the optimum threshold is the highest threshold below which the mean of values exceeding a given threshold is linear with respect to the threshold. On the other hand, the optimum threshold based on the parameter stability method is the highest threshold below which the estimated shape and modified scale parameters are linear against the threshold \cite{MehrniaTWC}, \cite{evt_04}.
The MRL method can be applied prior to the estimation of the tail distribution by using UGPD, while the parameter stability method should be applied after deriving the tail statistics by using UGPD.
Since the bi-variate analysis are limited to time periods where both thresholds are exceeded, in this study, only the exceedances that occur simultaneously are considered, while any exceedances that occur at different time intervals are excluded. Following this, the parameters of the UGPD are estimated for the exceedances of each channel data, using the maximum likelihood (ML) estimator with the corresponding optimal thresholds.
The UGPD fitted to the tail of sample sequence $X^n$ is formulated as

\begin{equation}
\label{eqn:gpd_dist}
    G_x(\tilde{\sigma}_x,\xi_x) = 1-\Big[1+\frac{\xi_x (u_x-x)}{\tilde{\sigma}_x}\Big]^{-1/\xi_x},
\end{equation}
where $(u_x-x)$ denotes the non-negative exceedance for observation $x \in X^n$ below optimum threshold $u$; and $\xi_x$ and $\tilde{\sigma}_x$ are the estimated shape and scale parameters of UGPD, respectively. 
The validity of the fitted UGPD model to the tail distributions is assessed by using the probability plots, including the probability-probability (PP) plot and the quantile-quantile (QQ) plot. The PP and QQ plots are graphical techniques that are employed to compare the empirical values with the corresponding modeled results. In the PP plot, the empirical CDF of the occurrence for each extreme value is plotted against the corresponding CDF obtained by the UGPD, while in the QQ plot, the empirical extreme quantile is plotted against the corresponding quantile obtained by the inverse of UGPD \cite{MehrniaTWC}. 
If the UGPD properly characterizes the extreme values exceeding threshold $u$, then both PP and QQ plots conform to the unit diagonal line, i.e., a straight line with the horizontal angle $45^\circ$. Thereafter, the inter-relationship of bi-variate extremes is modeled by applying the Poisson point process approach.

In bi-variate extreme analysis, the first step involves examining the presence of dependency within the tail samples and the total samples. The feasibility of spatial diversity is evaluated using the correlation between received powers in the total samples. Spatial diversity is considered reasonable if the correlation coefficient is within the range of $0.1$ and $0.5$ among the total samples, since otherwise fading occurs simultaneously at different links \cite{corcoef_01}. On the other hand, if the extremes of the two sample sequences are independent, i.e., the correlation coefficient approaches $0$, there is no need for bi-variate modeling. However, if there is correlation between the tail samples, it is necessary to investigate the inter-relationship of the extreme values of receivers.


In the Poisson point process approach, the channel data is transformed in two steps: i) Fr\'echet transformation and ii) Pickands coordinates.
The Fr\'echet transformation is applied on the obtained tail sequences of $X^n$ and $Y^n$ as follows:

\begin{equation}
\label{eqn:x_tilde}
    \tilde{x} = -\Big( \log \Big\{ 1-\zeta_{x} \Big[1+\frac{\xi_x (u_x-x)}{\tilde{\sigma}_{x}}\Big]^{-1/\xi_x} \Big\} \Big) ^{-1}, \hspace{0.3cm} x<u_x,
\end{equation}
and
\begin{equation}
\label{eqn:y_tilde} 
    \tilde{y} = -\Big( \log \Big\{ 1-\zeta_{y} \Big[1+\frac{\xi_y (u_y-y)}{\tilde{\sigma}_{y}}\Big]^{-1/\xi_y} \Big\} \Big) ^{-1}, \hspace{0.3cm} y<u_y,
\end{equation}
where $\zeta_{x} = Pr(x<u_x)$ and $\zeta_{y} = Pr(y<u_y)$ for optimum thresholds $u_x$ and $u_y$, respectively. Accordingly, the marginal distribution of variables $\tilde{x}$ and $\tilde{y}$ approximately have Fr\'echet distribution for $x < u_x$, $x \in X^n$, and $y< u_y$, $y \in Y^n$. After applying the Fr\'echet transformation, the pseudo polar Pickands transformation is applied by introducing pseudo polar radial $r$ and angular $\omega$ components as 

\begin{equation}
\begin{split}
    r = \frac{-\tilde{x}}{n} + \frac{-\tilde{y}}{n},\\
    \omega = \frac{-\tilde{x}/n}{-\tilde{x}/n-\tilde{y}/n},  
\end{split}
\end{equation}
where $n \in N$ is the length of vector $\tilde{X}^n$ and $\tilde{Y}^n$, and $\tilde{X}^n$ and $\tilde{Y}^n$ are the Fr\'echet transformation of $X^n$ and $Y^n$, respectively.
Upon determining the probability measure function of the Pickands angular coordinate $H_{pp}(\omega)$, satisfying the mean constraint $\int_{\omega=0}^1 \omega \, H_{pp}(d\omega) = 0.5$, the density function of the Poisson point process $\Lambda(\tilde{x},\tilde{y})$ is computed as

\begin{flalign}
\label{eqn:capitalLambdaThm}
\begin{split}
    \Lambda(\tilde{x},\tilde{y}) &= \int_{r=-\infty}^{0} \int_{\omega=0}^{1} 2 \frac{dr}{r^2} dH(\omega) \\ 
    &=  -2 \int_{\omega=0}^{1} \max \big(\frac{\omega}{-\tilde{x}/n},\frac{1-\omega}{-\tilde{y}/n}\big) \,H_{pp}(d\omega),
\end{split}
\end{flalign}
where $\Lambda(\tilde{x},\tilde{y})$ is defined on space $\{(0,\infty) \times (0,\infty) \backslash (0,\tilde{x}) \times (0,\tilde{y}) \}$, denoting space $\{(0,\infty) \times (0,\infty)\}$ excluding sub-space $\{(0,\tilde{x}) \times (0,\tilde{y})\}$.
Finally, the BGPD model based on the Poisson point process approach is determined as \cite{Mehrnia_twcBGPD}

\begin{equation}
\label{eqn:bgpd}
    G_{pp}(\tilde{x},\tilde{y}) = \exp(-\Lambda(\tilde{x},\tilde{y})).
\end{equation}

\subsection{Bi-variate Rate selection}
\label{sec:bi-rateselec}
The goal of this section is to choose the maximal transmission rate that achieves the target reliability $\epsilon$ by satisfying constraint (\ref{eqn:pfRxn}). 
The maximum rate is determined as a function of $G_{pp}$ as 

\begin{flalign}
\label{eqn:ReGppGeneral}
\begin{split}
   R(G_{pp}) &= sup\big\{ R(X^n+Y^n) \ge 0 : p_{F}(R(X^n+Y^n)) \le \epsilon\big\}\\
   \hfill \hspace{0.3cm}
   &= \log_{2} \big(1+G_{pp}^{-1}(\varepsilon_n)\big),
\end{split}
\end{flalign}
where $G_{pp}^{-1}(\varepsilon_n)$ is an estimate of a positive $\varepsilon_n$-quantile of the bi-variate channel $G_{pp}$. Here, the objectives are i) to determine $G_{pp}^{-1}(\varepsilon_n)$ and ii) to find $\varepsilon_{n}$ that maximizes the transmission rate $R(G_{pp})$ while satisfying the target reliability $\epsilon$ based on constraint (\ref{eqn:pfRxn}).

In order to determine $G_{pp}^{-1}(\varepsilon_n)$, we first obtain the inverse joint CDF function of the training samples from channel data $X^n$ and $Y^n$. 

\begin{theorem}
\label{theorem:rate}
    Let $G_{pp}(\hat{\tilde{x}},\hat{\tilde{y}}) \approx G_{pp}(\hat{\omega})$ be the estimated BGPD model fitted to the training samples $X^{n} = \{x_1,x_2,...,x_n\}$ and $Y^{n} = \{y_1,y_2,...,y_n\}$. Then, the inverse joint CDF function $F^{-1} \sim G_{pp}^{-1}(\varepsilon_n)$ can be obtained as 
    
    \begin{equation}
        G_{pp}^{-1}(\varepsilon_n) = H^{-1}_{\hat{\omega}} \Big\{ \frac{1}{2} max(-\hat{\tilde{x}}/n-\hat{\tilde{y}}/n) \ln{\varepsilon_n} \Big\},
    \end{equation}
    where $\hat{\Tilde{x}}$ and $\hat{\Tilde{y}}$ are the estimated Fr\'echet transformation of $G_x(\hat{\tilde{\sigma}}_x,\hat{\xi}_x)$ and $G_y(\hat{\tilde{\sigma}}_y,\hat{\xi}_y)$ fitted to $X^n$ and $Y^n$ training sequences, respectively, $\ln{(.)}$ function refers to the natural logarithm, $H^{-1}_{\hat{\omega}}(.)$ is the CDF-inverse of the $\beta$-distribution fitted to the pseudo-polar angular component $\hat{\omega}=\frac{\hat{\Tilde{x}}/n}{\hat{\Tilde{x}}/n+\hat{\Tilde{y}}/n}$.
\end{theorem}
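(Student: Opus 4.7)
The plan is to invert the BGPD model of (9) directly: set $G_{pp}(\hat{\tilde{x}},\hat{\tilde{y}})=\varepsilon_n$, feed in the Poisson-point-process expression (8) for $\Lambda$, substitute the pseudo-polar change of variables (7), and then exploit the stated approximation $G_{pp}(\hat{\tilde{x}},\hat{\tilde{y}}) \approx G_{pp}(\hat{\omega})$ to reduce the bi-variate inverse to a one-dimensional inverse on the angular component, which by construction is the inverse beta CDF $H^{-1}_{\hat{\omega}}$.

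Concretely, I would first take the natural logarithm of $\exp(-\Lambda(\hat{\tilde{x}},\hat{\tilde{y}}))=\varepsilon_n$ to get $\Lambda(\hat{\tilde{x}},\hat{\tilde{y}})=-\ln\varepsilon_n$, and then substitute (8) so that
\begin{equation*}
-2\int_{0}^{1}\max\Big(\frac{\omega}{-\hat{\tilde{x}}/n},\frac{1-\omega}{-\hat{\tilde{y}}/n}\Big)H_{pp}(d\omega)=-\ln\varepsilon_n .
\end{equation*}
Using (7) I would write $-\hat{\tilde{x}}/n = r\hat{\omega}$ and $-\hat{\tilde{y}}/n = r(1-\hat{\omega})$ with $r=-\hat{\tilde{x}}/n-\hat{\tilde{y}}/n$, so the radial factor comes out of the $\max$ and the integrand depends only on the ratios $\omega/\hat{\omega}$ and $(1-\omega)/(1-\hat{\omega})$. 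Splitting the interval at the crossover point $\omega=\hat{\omega}$, the integral can then be written as a sum of two pieces that, together with the mean constraint $\int_0^1\omega\,H_{pp}(d\omega)=1/2$ and the beta fit assumed for $H_{pp}$, collapse to a closed-form expression depending only on $H_{\hat{\omega}}(\hat{\omega})$.

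Isolating $H_{\hat{\omega}}(\hat{\omega})$ from the resulting scalar equation gives
\begin{equation*}
H_{\hat{\omega}}(\hat{\omega})=\tfrac{1}{2}\max(-\hat{\tilde{x}}/n-\hat{\tilde{y}}/n)\ln\varepsilon_n ,
\end{equation*}
and applying $H^{-1}_{\hat{\omega}}$ to both sides yields the stated identity for $G_{pp}^{-1}(\varepsilon_n)$.

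The main obstacle will be the step in which the integral against $H_{pp}$ is reduced to a scalar multiple of $H_{\hat{\omega}}(\hat{\omega})$: this is precisely where the approximation $G_{pp}(\hat{\tilde{x}},\hat{\tilde{y}}) \approx G_{pp}(\hat{\omega})$ is doing the real work, and making the reduction rigorous requires carefully splitting the integration domain at $\omega=\hat{\omega}$ and using the Pickands mean constraint. A secondary, but still annoying, difficulty is sign bookkeeping: because the paper works with lower-tail extremes, the quantities $-\hat{\tilde{x}}/n$, $-\hat{\tilde{y}}/n$ and $r$ are all negative, so I would need to check that the argument of $H^{-1}_{\hat{\omega}}$ on the right-hand side lies in $[0,1]$ for the inverse beta CDF to be well defined before declaring the identity proved.
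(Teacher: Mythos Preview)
Your route diverges from the paper's at the first substantive step. You start from equation~(8) --- the form of $\Lambda$ in which the radial integral has already been evaluated and the $\max$ appears --- and then try to reduce the resulting single integral against $H_{pp}$ by splitting at $\omega=\hat\omega$ and invoking the Pickands mean constraint. The paper instead goes back to the \emph{double}-integral representation of $\Lambda$ (the first equality in~(8), before the $r$-integral is carried out), performs the change of variable $\hat r\mapsto\hat\omega$ via $\hat r=(-\hat{\tilde x}/n)/\hat\omega$ directly in the radial integral, and uses the empirical range of $\hat r$ together with $\min(-\hat{\tilde{x}}/n-\hat{\tilde{y}}/n)\to-\infty$ to obtain $\Lambda=\dfrac{-2}{\max(-\hat{\tilde{x}}/n-\hat{\tilde{y}}/n)}\,H(\hat\omega)$ in one stroke. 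The mean constraint $\int_0^1\omega\,H_{pp}(d\omega)=\tfrac12$ is never used in the paper's argument.

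The step you flagged as ``the main obstacle'' is a genuine gap, and it does not close the way you suggest. After your split at $\omega=\hat\omega$ and factoring out $1/r$, the integral becomes
\[
\frac{1}{1-\hat\omega}\int_{0}^{\hat\omega}(1-\omega)\,H_{pp}(d\omega)\;+\;\frac{1}{\hat\omega}\int_{\hat\omega}^{1}\omega\,H_{pp}(d\omega),
\]
which is (up to normalization) the Pickands dependence function evaluated at $\hat\omega$. This quantity depends on the \emph{partial first moments} $\int_{0}^{\hat\omega}\omega\,H_{pp}(d\omega)$ and $\int_{\hat\omega}^{1}\omega\,H_{pp}(d\omega)$, not merely on the CDF value $H(\hat\omega)$. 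The mean constraint ties the two partial moments to each other, but it does not express either of them in terms of $H(\hat\omega)$ alone, even under a beta assumption. So the collapse you need --- to the scalar equation $H_{\hat\omega}(\hat\omega)=\tfrac12\max(r)\ln\varepsilon_n$ --- does not follow from~(8) plus the mean constraint. To reach the paper's formula you have to work with the unevaluated double integral and change variables in the radial direction, as the paper does.
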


\begin{proof}
    Assume that $G_{pp}(\hat{\tilde{x}},\hat{\tilde{y}})=exp(-\Lambda(\tilde{x},\tilde{y}))$ models the bi-variate GPD fitted to the training samples $X^{n}$ and $Y^{n}$. Then,
    
    \begin{flalign*}
    \begin{split}
        \Lambda(\tilde{x},\tilde{y}) &=  \int_{\hat{\omega}=0}^{1} \int_{\hat{r}=-\infty}^{0} 2 \frac{d\hat{r}}{\hat{r}^2} dH(\hat{\omega})\\
        &= \int_{\hat{\omega}=0}^{1} \int_{\frac{-\hat{\tilde{x}}/n}{min(-\hat{\tilde{x}}/n-\hat{\tilde{y}}/n)}}^{\frac{-\hat{\tilde{x}}/n}{max(-\hat{\tilde{x}}/n-\hat{\tilde{y}}/n)}} \frac{-2}{\hat{\tilde{x}}/n} d\hat{\omega} \, dH(\hat{\omega}).
    \end{split}
    \end{flalign*}
    In the above equation, $\frac{d\hat{r}}{\hat{r}^2}$ has been replaced by $\frac{-2}{\hat{\tilde{x}}/n} d\hat{\omega}$ since $\hat{\omega} = \frac{-\hat{\tilde{x}}/n}{-\hat{\tilde{x}}/n-\hat{\tilde{y}}/n} = \frac{-\hat{\tilde{x}}/n}{\hat{r}}$, and therefore $\hat{r}$ can be written as $\hat{r} = \frac{-\hat{\tilde{x}}/n}{\hat{\omega}}$. Additionally, the boundary of $\hat{r}$ has been computed within the range $[ \frac{-\hat{\tilde{x}}/n}{min(-\hat{\tilde{x}}/n-\hat{\tilde{y}}/n)} , \frac{-\hat{\tilde{x}}/n}{max(-\hat{\tilde{x}}/n-\hat{\tilde{y}}/n)} ]$ since
    
    \begin{equation*}
    \begin{split}
        min(-\hat{\tilde{x}}/n-\hat{\tilde{y}}/n) < \hat{r} < max(-\hat{\tilde{x}}/n-\hat{\tilde{y}}/n) \\ \Rightarrow   
        \frac{-\hat{\tilde{x}}/n}{min(-\hat{\tilde{x}}/n-\hat{\tilde{y}}/n)} < \frac{-\hat{\tilde{x}}/n}{\hat{r}} = \hat{\omega}  < \frac{-\hat{\tilde{x}}/n}{max(-\hat{\tilde{x}}/n-\hat{\tilde{y}}/n)}
    \end{split} 
    \end{equation*}
Moreover, $min(-\hat{\tilde{x}}/n-\hat{\tilde{y}}/n) \to -\infty$ and hence, $\Lambda(\tilde{x},\tilde{y})$ is further simplified as

\begin{flalign}
\label{eqn:lambdaH}
\begin{split}
    \Lambda(\tilde{x},\tilde{y}) &= -2 (\frac{1}{max(-\hat{\tilde{x}}/n-\hat{\tilde{y}}/n)}) \int_{\hat{\omega}=0}^{1} dH(\hat{\omega})\\
    &= \frac{-2}{max(-\hat{\tilde{x}}/n-\hat{\tilde{y}}/n)} H(\hat{\omega}),
\end{split}
\end{flalign}
On the other hand, since $G_{pp}(\hat{\tilde{x}},\hat{\tilde{y}})=exp(-\Lambda(A))$, $\Lambda(A) = -\ln{G_{pp}(\hat{\tilde{x}},\hat{\tilde{y}})}$, and therefore,  

\begin{align}
\label{eqn:Hw}
        -\ln{G_{pp}(\hat{\tilde{x}},\hat{\tilde{y}})} = \frac{-2}{max(-\hat{\tilde{x}}/n-\hat{\tilde{y}}/n)} H(\hat{\omega}), \\
        \Rightarrow \hat{\omega} = H^{-1}_{\hat{\omega}} \Big\{ \frac{max(-\hat{\tilde{x}}/n-\hat{\tilde{y}}/n)}{2} \ln{G_{pp}(\hat{\tilde{x}},\hat{\tilde{y}})} \Big\}
\end{align}
where $\hat{\omega}=\frac{-\hat{\tilde{x}}/n}{-\hat{\tilde{x}}/n-\hat{\tilde{y}}/n}$, the angular component, is a rational function defined only on $[0,1]$, and is fitted to the $\beta$-distribution \cite{pickands_02} with parameters $\hat{p}$ and $\hat{q}$, i.e., $\hat{\omega} \sim \beta(\hat{\omega};\hat{p},\hat{q})$, and hence, $H^{-1}_{\hat{\omega}}(.)$ is the CDF-inverse of the $\beta$-distribution fitted to the angular component $\hat{\omega}$.
\end{proof}

Since determining the inverse of $\beta$ function is a non-trivial task, the Newton–Raphson method is utilized to determine the inverse of $\beta(\hat{\omega};\hat{p},\hat{q})$ \cite{newton}. Newton–Raphson method is a recursive approximation technique for finding the root and inverse of a differentiable function.
 
The optimum transmission rate based on the proposed rate selection framework is then determined by substituting the results of Theorem~\ref{theorem:rate} into (\ref{eqn:ReGppGeneral}) as

\begin{equation}
\label{eqn:R_gpp}
    R(G_{PP}) = \log_2 \Big( 1+H^{-1}_{\hat{\omega}}\big\{ \frac{1}{2} max(-\hat{\Tilde{x}}/n-\hat{\Tilde{y}}/n) \ln{\varepsilon_n} \big\} \Big),
\end{equation}
for the maximum allowed error probability $\varepsilon_n$, $\varepsilon_n \leq \epsilon$, where $H^{-1}_{\hat{\omega}}(.)$ denotes the CDF-inverse of $\beta$-distribution fitted to the angular component $\hat{\omega}$ with shape parameters $\hat{p}$ and $\hat{q}$, denoted by $\beta(\hat{\omega};\hat{p},\hat{q})$.

\subsection{Outage Probability of the Transmission Rate}
\label{sec:outage_prob}
Suppose $R(G_{pp})$ is a valid transmission rate. In that case, the corresponding outage probability $p_{F}(R(G_{pp}))$ is expected not to exceed the target reliability $\epsilon$, according to constraint (\ref{eqn:pfRxn}).
The average outage probability for the transmission rate obtained in the previous section is obtained by substituting (\ref{eqn:R_gpp}) into the outage probability equation expressed in (\ref{eqn:outageprob}) as 

\begin{flalign}
\label{eqn:pfR}
\begin{split}
    p_F(R(G_{pp}))
    &= P \Big[ \log_2\big(1+H^{-1}_{\hat{\omega}}\big(\frac{1}{2} \, max(\hat{r}) \, \ln{\varepsilon_n} \big) \big) \\ 
    &> \log_2(1+Z)\Big]\\
    &= H_{Z}\big(H^{-1}_{\hat{\omega}}\big(\frac{1}{2} \, max(\hat{r}) \, \ln{\varepsilon_n}\big)\big),
\end{split}
\end{flalign}
where $\hat{r} = \frac{-\hat{\tilde{x}}}{n} + \frac{-\hat{\Tilde{y}}}{n}$, $Z = \frac{-\tilde{x}}{N} + \frac{-\Tilde{y}}{N}$, $N$ is the total number of samples including training and test samples, and $H_{Z}(.)$ is the CDF of $\beta$-distribution with parameters $p$ and $q$ fitted to $Z$, i.e., $\beta(Z;p,q)$, assuming the channel is perfectly known. In order to determine the maximum allowed error probability $\varepsilon_n$, the constraint (\ref{eqn:pfRxn}) should be satisfied as 

\begin{equation}
\label{eqn:varepsi}
    \varepsilon_n = exp \Big\{  \frac{2}{max(\hat{r})} \, H_{H^{-1}_{\epsilon} (p,q)} \big( \hat{p},\hat{q}  \big)  \Big\},
\end{equation}
where $H^{-1}_{\epsilon} (p,q)$ denotes the $\epsilon$-quantile of the $\beta$-distribution with parameters $p$ and $q$ fitted to $Z$, and $H_{H^{-1}_{\epsilon} (p,q)} \big( \hat{p},\hat{q}\big)$ is CDF of $\beta$-distribution with parameters $\hat{p}$ and $\hat{q}$ fitted to $\hat{\omega}$ and computed at point $H^{-1}_{\epsilon} (p,q)$.

\subsection{Confidence interval}
The confidence interval for the transmission rate can be calculated by deriving confidence intervals for the estimated UGPD parameters. Although various methods can be used for computing confidence intervals, bootstrapping methods are mostly preferred due to their higher accuracy and avoidance of assumptions on the properties of the distribution of the original samples. Bootstrapping methods generate confidence intervals for statistical parameters by drawing pseudo-samples either from the original sample sequence or a model fitted to the original sample.
In the realm of statistics, conventional bootstrapping-based methods are highly dependent on the asymptotic normality of parameter estimate to construct a standard confidence interval for estimation of the parameter $\hat{\theta}$.
However, a standard confidence interval might not be a proper way of considering the estimation uncertainty for many applications, especially when the distribution of the parameter estimation, i.e., $\hat{\theta}$, is not normal. Alternatively, the bootstrap-based bias-corrected and accelerated (BCA) method has been demonstrated to provide higher accuracy with narrower CIs correcting (i) the non-normality of $\hat{\theta}$ by utilizing the bootstrap distribution where a parametric model assumption is made for $\hat{\theta}$, (ii) the bias of $\hat{\theta}$ by using the bias correction parameter, and (iii) the non-constant standard error of $\hat{\theta}$ by introducing the acceleration parameter \cite{ci_10}, \cite{bootci_02}. 

\subsubsection{Fundamentals of bootstrap-based BCA}
\label{sec:bca-ci}
$B$ bootstrap samples are generated at each bootstrapping round with replacement, and then an estimate of the parameter is obtained as $\hat{\theta}_b$ at the $b$-th round for $b \in \{1,2,...,B\}$. 
The BCA method is based on the assumption that there is a monotone transformation $\hat{\phi} = h(\hat{\theta})$ such that $\hat{\phi} \sim N(\phi - z_0^{\hat{\theta}} \tau_{\phi}, \tau_{\phi}^2)$, where the transformed variable $\hat{\phi}$ follows a normal distribution with mean $\phi - z_0^{\hat{\theta}} \tau_{\phi}$ and variance $\tau_{\phi}^2$, $\phi$ is the median of the distribution of the bootstrap estimates, $\tau_{\phi} = 1+a^{\hat{\theta}}\phi$, and bias correction $z_0^{\hat{\theta}}$ and acceleration $a^{\hat{\theta}}$ factors are constant and estimated based on the CDF of normal distribution and jackknife re-sampling, respectively.

The bias correction factor $z_0^{\hat{\theta}}$ is estimated as 

\begin{equation}
\label{eqn:biascorec}
    z_0^{\hat{\theta}} = \Phi^{-1}(\hat{F}_e(\hat{\theta})),
\end{equation}
where $\Phi$ is the standard normal CDF and  $\hat{F}_e(.)$ is the empirical probability of the estimated parameter $\hat{\theta}$ and calculated as 

\begin{equation}
    \hat{F}_e = P(\hat{\theta}_b<\hat{\theta}) = \frac{m}{B}, \hspace{0.3cm} b=1,...,B,
\end{equation}
in which $m$ is the number of estimated parameters of bootstrapping $\hat{\theta}_{b}$ smaller than the estimated parameter based on the original samples $\hat{\theta}$. Please note that $\hat{\theta}$ is the estimated parameter from the original samples while $\hat{\theta}_{b}$ is the estimated parameter from bootstrapping the original samples. 

The acceleration factor $a^{\hat{\theta}}$ can be obtained based on jackknife re-sampling (or leave-one-out procedure) \cite{bootci_02}\nocite{ci_01}-\cite{ci_02} by incorporating $n$ replicates of the original samples, where $n$ is the total number of the original samples. Jackknife resampling involves generating a series of jackknife replicates by repeatedly leaving out one observation from the original sample. Specifically, the first replicate is generated by excluding the first observation, the second replicate is generated by excluding the second observation, and so on until $n$ replicates of size $n-1$ are obtained. For each sample number $j$, $j \in \{1,2,...,n \}$, the estimated parameter $\hat{\theta}^{j}$ is obtained. Then, the average of these estimations is calculated as $\bar{\theta} = \frac{1}{n}\sum_{j=1}^{n}\hat{\theta}^{j}$.
The acceleration factor is computed as 

\begin{equation}
\label{eqn:accefact}
    a^{\hat{\theta}} = \frac{\sum_{j=1}^{n}[\hat{\theta}^{j}-\bar{\theta}]^{3}}{6\{ 
\sum_{j=1}^{n} [\hat{\theta}^{j}-\bar{\theta}]^{2} \}^{\frac{3}{2}}}.
\end{equation}

Finally, the confidence interval is constructed as 

\begin{equation}
\label{eqn:thetahat}
    \hat{\theta} = [\theta_{L},\theta_{U}],
\end{equation}
where the estimated $\hat{\theta}_{b}$'s are ordered from smallest to the largest, $L = a_1^{\hat{\theta}} \times (B+1)$, and $U = a_2^{\hat{\theta}} \times (B+1)$ for $a_1^{\hat{\theta}}$ and $a_2^{\hat{\theta}}$ defined as 

\begin{equation}
\label{eqn:a1_coef}
    a_1^{\hat{\theta}} = \Phi \bigg\{  z_0^{\hat{\theta}} + \frac{z_0^{\hat{\theta}} - z^{\hat{\theta}}_{1-\frac{\alpha}{2}}}{1-a^{\hat{\theta}}(z_0^{\hat{\theta}} - z^{\hat{\theta}}_{1-\frac{\alpha}{2}})}  \bigg\},
\end{equation}
and 

\begin{equation}
\label{eqn:a2_coef}
    a_2^{\hat{\theta}} = \Phi \bigg\{  z_0^{\hat{\theta}} + \frac{z_0^{\hat{\theta}} + z^{\hat{\theta}}_{1-\frac{\alpha}{2}}}{1-a^{\hat{\theta}}(z_0^{\hat{\theta}} + z^{\hat{\theta}}_{1-\frac{\alpha}{2}})}  \bigg\},
\end{equation}
where $z^{\hat{\theta}}_{1-\frac{\alpha}{2}}$ is the $100(1-\frac{\alpha}{2})$ percentile of the standard normal distribution for significance level (or probability of wrong decision) $\alpha$ \cite{ci_02}.

\subsubsection{BCA in CI determination of UGPD parameters}
\label{sec:bca_ugpdci}
The estimated scale and shape parameters of UGPD are not exactly $\hat{\tilde{\sigma}}$, $\hat{\xi}$ and typically take values within the confidence intervals $[\hat{\tilde{\sigma}}_L,\hat{\tilde{\sigma}}_U]$ and $[\hat{\xi}_L,\hat{\xi}_U]$, respectively. The BCA-based algorithm that is utilized to determine the upper and lower bounds of the CI of the shape and scale parameters are given in Algorithm~\ref{alg:bca} and is described in detail as follows. The inputs of the algorithm are the observation sample set {$X^n$}; the significance level $\alpha$, depending on the application and how much error is allowed in the parameter estimation; and the number of newly generated data sets in the bootstrapping process denoted by $B$. The algorithm starts by bootstrapping the original data set for $B$ times and storing the new data sets in the vectors denoted by {$X^n_b$}, $b \in \{1,2,...B\}$ (Lines~$1-2$). The UGPD is fitted to the tail of data set {$X^n_b$}, and the parameters of UGPD are estimated as $\hat{\tilde{\sigma}}_b$ and $\hat{\xi_b}$ for $b \in \{1,2,...B\}$ (Line~$3$). Upon estimating the UGPD parameters for each data set {$X^n_b$}, the bias correction factor is obtained based on (\ref{eqn:biascorec}) for the estimated scale and shape parameters of UGPD as $z_0^{\hat{\tilde{\sigma}}}$ and $z_0^{\hat{\xi}}$, respectively (Line~$5$). The corresponding acceleration factors of the scale and shape parameters (Line~$10$) based on (\ref{eqn:accefact}) are determined by first applying the jackknife method and obtaining $n-1$ replica of {$X^n$} by leaving out the first $j$ samples of {$X^n$}, denoted by {$X^{n-j}$} (Line~$6-7$) and then, estimating the scale and shape parameters of fitted UGPD to each data set {$X^{n-j}$} as $\hat{\tilde{\sigma}}^j$ and $\hat{\xi}^j$, respectively (Line~$8$). Next, the $\{(1-\alpha) \times 100\} \%$ confidence intervals of the scale and shape parameters, denoted by $[\tilde{\sigma}_L,\tilde{\sigma}_U]$ and $[\xi_L$,$\xi_U]$, respectively, are estimated for the significance level $\alpha$ based on (\ref{eqn:thetahat}) (Line~$12$), through the determination of coefficients $a_1^{\hat{\tilde{\sigma}}}$, $a_2^{\hat{\tilde{\sigma}}}$, $a_1^{\hat{\xi}}$, and $a_2^{\hat{\xi}}$ according to (\ref{eqn:a1_coef}) and (\ref{eqn:a2_coef}) for the scale and shape parameters (Line~$11$). The same procedure is applied to the sample set $Y^n$ to compute the corresponding CI of the estimated parameters of UGPD fitted to $Y^n$.

\begin{algorithm}[ht]
\caption{\textbf{BCA-based CI Determination Algorithm}}
\label{alg:bca}
\begin{algorithmic}
\STATE \textbf{Input}: {$X^n$} $= [x_{1}, x_{2},...,x_{n}]$, $\alpha$, and $B$;
\STATE \textbf{Output}: $\tilde{\sigma}_L$, $\tilde{\sigma}_U$, $\xi_L$, and $\xi_U$;
\end{algorithmic}
\begin{algorithmic}[1]
\FOR{b = 1:B}
 \STATE obtain bootstrap samples with replacement of {$X^n$}, as
 {$X^n_b$};
 \STATE estimate $\hat{\tilde{\sigma}}_b$ and $\hat{\xi}_b$, by fitting the UGPD to the tail samples of {$X^n_b$};
\ENDFOR
\STATE determine the bias correction factors $z_0^{\hat{\tilde{\sigma}}}$ and $z_0^{\hat{\xi}}$
\FOR{$j = 1:n-1$}
\STATE leave out the first $j$ samples of {$X^n$}, as {$X^{n-j}$};
\STATE estimate $\hat{\tilde{\sigma}}^j$ and $\hat{\xi}^j$, by fitting UGPD to {$X^{n-j}$};
\ENDFOR
\STATE determine acceleration factors $a^{\hat{\tilde{\sigma}}}$ and $a^{\hat{\xi}}$;
\STATE determine coefficients $a_1^{\hat{\tilde{\sigma}}}$, $a_2^{\hat{\tilde{\sigma}}}$, $a_1^{\hat{\xi}}$, and $a_2^{\hat{\xi}}$;
\STATE determine the CIs of the $(1-\alpha)$ BCA-CI, $[\tilde{\sigma}_L,\tilde{\sigma}_U]$ and $[\xi_L$,$\xi_U]$, for the UGPD parameters.
\RETURN $\tilde{\sigma}_L$, $\tilde{\sigma}_U$, $\xi_L$, and $\xi_U$;
\end{algorithmic}
\end{algorithm}

\subsubsection{BCA in CI determination of the transmission rate}
\label{sec:bca_rateci}
Upon determining the upper and lower bounds of the estimated CI corresponding to the scale and shape parameters of $X^n$ and $Y^n$, the $\Tilde{x}$, $\Tilde{y}$, $R(G_{pp})$, are estimated within the intervals $[\tilde{x}_L,\tilde{x}_U]$, $[\tilde{y}_L,\tilde{y}_U]$, $[R_{L}, R_{U}]$, respectively. The estimated transmission rate, $R(G_{pp})$, is a function of the pseudo-polar angular component $\hat{\omega}$, or equivalently Fr\'echet transformations $\hat{\Tilde{x}}$ and $\hat{\Tilde{y}}$. The CI of the angular component and the Fr\'echet transformed variables are functions of the confidence intervals of the scale and shape parameters.
Therefore, the transmission rate $R(G_{pp})$ is estimated to be within the interval $[R_{L}, R_{U}]$, considering the confidence intervals of the estimated Pareto parameters.

Referring to (\ref{eqn:x_tilde}) and (\ref{eqn:y_tilde}), the upper (lower) bound of the scale parameters $\hat{\sigma}_x$ and $\hat{\sigma}_y$ along with the lower (upper) bound of the shape parameter $\hat{\xi}_x$ and $\hat{\xi}_y$ provide the lower (upper) bound of the Fr\'echet transformation parameters $\hat{\Tilde{x}}$ and $\hat{\Tilde{y}}$, respectively. Since the transmission rate obtained in Section~\ref{sec:bi-rateselec} is a function of the CDF of angular component $\hat{\omega}$, and 

\begin{equation*}
    \hat{\omega} = \frac{-\hat{\Tilde{x}}/n}{-\hat{\Tilde{x}}/n-\hat{\Tilde{y}}/n},
\end{equation*}
the upper (lower) bound of the $\hat{\Tilde{y}}$ in conjunction with the lower (upper) bound of $\hat{\Tilde{x}}$, results in the lower (upper) bound of $\hat{\omega}$. Moreover, since the CDF and CDF-inverse of $\hat{\omega}$ are monotonically increasing functions, the lower (upper) bound of $\hat{\omega}$ affects the lower (upper) bound of the estimated transmission rate $R(G_{pp})$ as follows:

\begin{equation}
\label{eqn:R_gpp_CI}
    R(G_{PP})_{L/U} = \log_2 \Big( 1+H^{-1}_{\hat{\omega}_{L/U}}\big\{ \frac{1}{2} max(-\hat{\Tilde{x}}/n-\hat{\Tilde{y}}/n) \ln{\varepsilon_n} \big\} \Big),
\end{equation}
where $R(G_{PP})_{L/U}$ and $H^{-1}_{\hat{\omega}_{L/U}}$ denotes the lower(L)/upper(U) bound of the estimated transmission rate $R(G_{PP})$ and CDF-inverse of the estimated Pickands angular component $H^{-1}_{\hat{\omega}}$, respectively.

It is worth noting that although $R(G_{pp})$ is a function of $max(-\hat{\Tilde{x}}/n-\hat{\Tilde{y}}/n)$ and $\varepsilon_n$, these parameters do not affect the CI of the transmission rate for the following reasons: For the first parameter, the $max$ function included in the $\varepsilon_n$ formula in (\ref{eqn:varepsi}) cancels the effect of $max(-\hat{\Tilde{x}}/n-\hat{\Tilde{y}}/n)$; for the second parameter, $\varepsilon_n$ approaches a fixed value equal to the target error probability $\epsilon$ for large sample numbers.

The complexity of the proposed MEVT-based rate selection framework is $O(n \, N_{Tx} \, N_{Rx})$, where $N_{Tx}$ and $N_{Rx}$ are the number of transmitters and receivers, respectively, and $n$ is the number of the training samples for individual channel sequences.

\section{Numerical Results}
\label{sec:numerical_results}
This section aims to evaluate the proposed rate selection framework compared to the traditional extrapolation-based approaches for a MIMO-URLLC system. In the following, we first provide the benchmark algorithm based on the traditional extrapolation-based approaches and simulation platform in Sections~\ref{sec:numres_benchmark} and \ref{sec:numres_simplat}, respectively. Then, in Section~\ref{sec:performance_rate}, we evaluate the performance of the proposed MEVT-based rate selection framework in determining the maximum transmission rate for URLLC in spatial diversity, evaluating the system reliability by using the outage probability metric and comparing it to the traditional extrapolation-based method. Finally, in Section~\ref{sec:performance_ci}, the estimation of the rate confidence interval is obtained by computing the confidence interval for the parameters of BGPD fitted to the tail distribution of the channel data, iterating over multiple sample sizes and values of the significance level $\alpha$.

\subsection{Benchmark Extrapolation-based Algorithm}
\label{sec:numres_benchmark}

In the conventional extrapolation-based method, the distribution of the current channel data is estimated for the reliability order ranging from $10^{-3}$ to $10^{0}$ PER, based on individual channel data sequences \cite{vehicular_01}-\cite{vehicular_02}. Subsequently, the tail distributions of these sequences are estimated by extrapolating the obtained results towards the ultra-reliable region of $10^{-9}$ to $10^{-5}$ PER \cite{urllc_02}.
After fitting different distributions to the samples in each observation sequence, the optimal distribution identified in our study is the Gaussian distribution, based on the Akaike information criterion/Bayesian information criterion (AIC/BIC) metric \cite{Mehrnia_twcBGPD}. In order to estimate the tail distribution of joint probability and establish a fair comparison between our proposed approach with the extrapolation-based method, we apply the Fr\'echet transformation, and then the Pseudo-polar Pickands transformation to the extrapolated results \cite{extrpltd_01}. Accordingly, the probability measure function of the corresponding angular component of Pickands transformation is modeled using the Beta distribution. Finally, the optimum transmission rate associated with the targeted error probability $\epsilon$ is modeled as

\begin{equation}
\label{eqn:R_extrpltd}
    R(F_{ep}) = \log_2 \Big( 1+H_{ep}^{-1}\big\{ \frac{1}{2} max(-\hat{\Tilde{x}}_{ep}/n-\hat{\Tilde{y}}_{ep}/n) \ln{\varepsilon^{\dagger}_n} \big\} \Big),
\end{equation}
where $F_{ep}$ is the extrapolated CDF, $\hat{\Tilde{x}}_{ep}$,$\hat{\Tilde{y}}_{ep}$ are the corresponding Fr\'echet variables of extrapolated results, $H_{ep}$ is the probability measure function of the angular component of extrapolated results, and $\varepsilon^{\dagger}_n$ is the maximum allowed error probability for the extrapolation-based method calculated as

\begin{equation}
\label{eqn:vare_extrpltd}
    \varepsilon^{\dagger}_n = exp \Big\{  \frac{2}{max(\hat{r}_{ep})} \, H^{ep}_{H^{-1}_{\epsilon ,ep} (p_{ep},q_{ep})} \big( \hat{p}_{ep},\hat{q}_{ep}  \big)  \Big\},
\end{equation}
where $\hat{r}_{ep} = \hat{\tilde{x}}_{ep} + \hat{\tilde{y}}_{ep} $, $H^{-1}_{\epsilon ,ep} (p_{ep},q_{ep})$ denotes the $\epsilon$-quantile of the $\beta$ distribution with parameters $p_{ep}$ and $q_{ep}$ fitted to the probability measure function of the angular component for the extrapolated results, and $H^{ep}_{H^{-1}_{\epsilon ,ep} (p_{ep},q_{ep})} \big( \hat{p}_{ep},\hat{q}_{ep}\big)$ is CDF of $\beta$ distribution with parameters $\hat{p}_{ep}$ and $\hat{q}_{ep}$ computed at point $H^{-1}_{\epsilon , ep} (p_{ep},q_{ep})$.

\subsection{Measurement Setup and Simulation Platform}
\label{sec:numres_simplat}
The proposed methodology is implemented on the dataset obtained from the engine compartment of a Fiat Linea vehicle, as depicted in Fig.~\ref{fig:engin}, utilizing a Vector Network Analyzer (VNA) (R$\And$S$\textsuperscript{\textregistered}$ ZVA$67$), as shown in Fig.~\ref{fig:vna}. The VNA is connected to the transmitter and receivers, Rx$1$ and Rx$2$, via the R$\And$S$\textsuperscript{\textregistered}$ ZV-Z$196$ and PE$361$ port cables, respectively, which have a length of $610$~mm. The transmitter is an omnidirectional antenna operating in the frequency range of $58$~GHz to $63$~GHz, with a nominal gain of $0$~dBi. The receivers are horn antennas operating in the frequency range of $50$-$75$ GHz, with a nominal gain of $24$~dBi, and horizontal and vertical half power beamwidths of $11^\circ$ and $9.5^\circ$, respectively.
The locations of the transmitter and receiver antennas are selected out of the possible locations for the wireless sensors located within the engine compartment, namely locations $13$ and $2\&4$ in \cite{demir2013engine}-Fig.~$1$, respectively, and shown in Fig.~\ref{fig:engin}.
The antennas are interconnected with the coaxial cables via a waveguide that operates within the frequency range of $50$-$65$~GHz. The waveguide exhibits an insertion loss of $0.5$dB and impedance of $50$$\Omega$.
It is worth noting that an omni-directional antenna at the
transmitter along with the directional antenna at two receivers are utilized to consider the usage of diversities in an ultra-reliable communication system and study the inter-relationships of extreme events for a MIMO system under an intra-vehicular communication. \textsc{MATLAB} is used to implement the proposed framework on the collected data.

\begin{figure}[h]
\centering
\captionsetup[subfigure]{labelformat=empty}
     \begin{center}
        \subfloat[(a)]{%
            \label{fig:engin}
            \includegraphics[width=0.7\columnwidth]{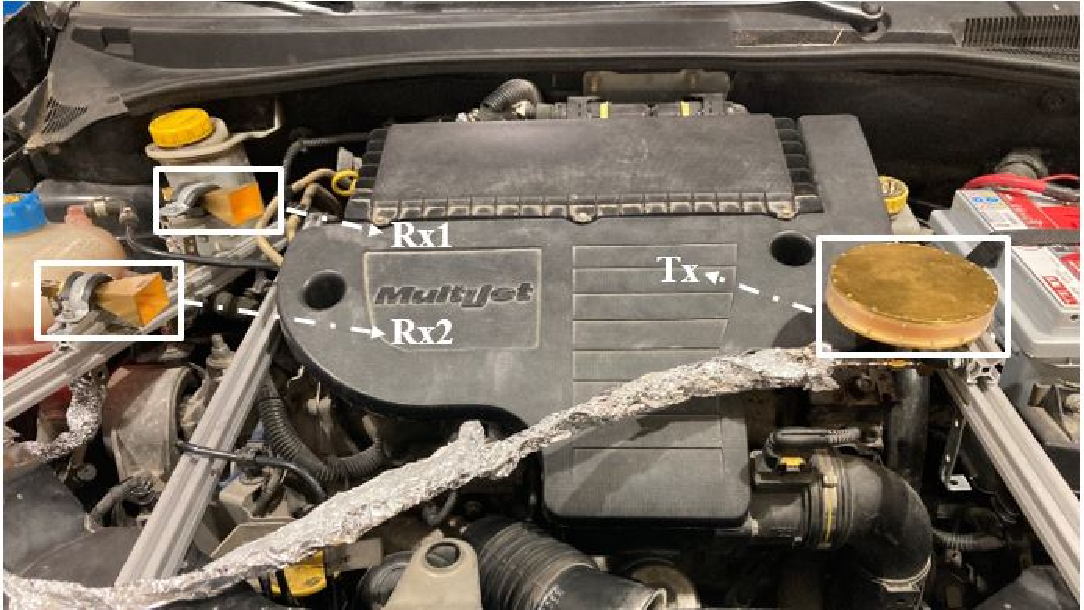}
        }\\%
        \subfloat[(b)]{%
            \label{fig:vna}
            \includegraphics[width=0.7\columnwidth]{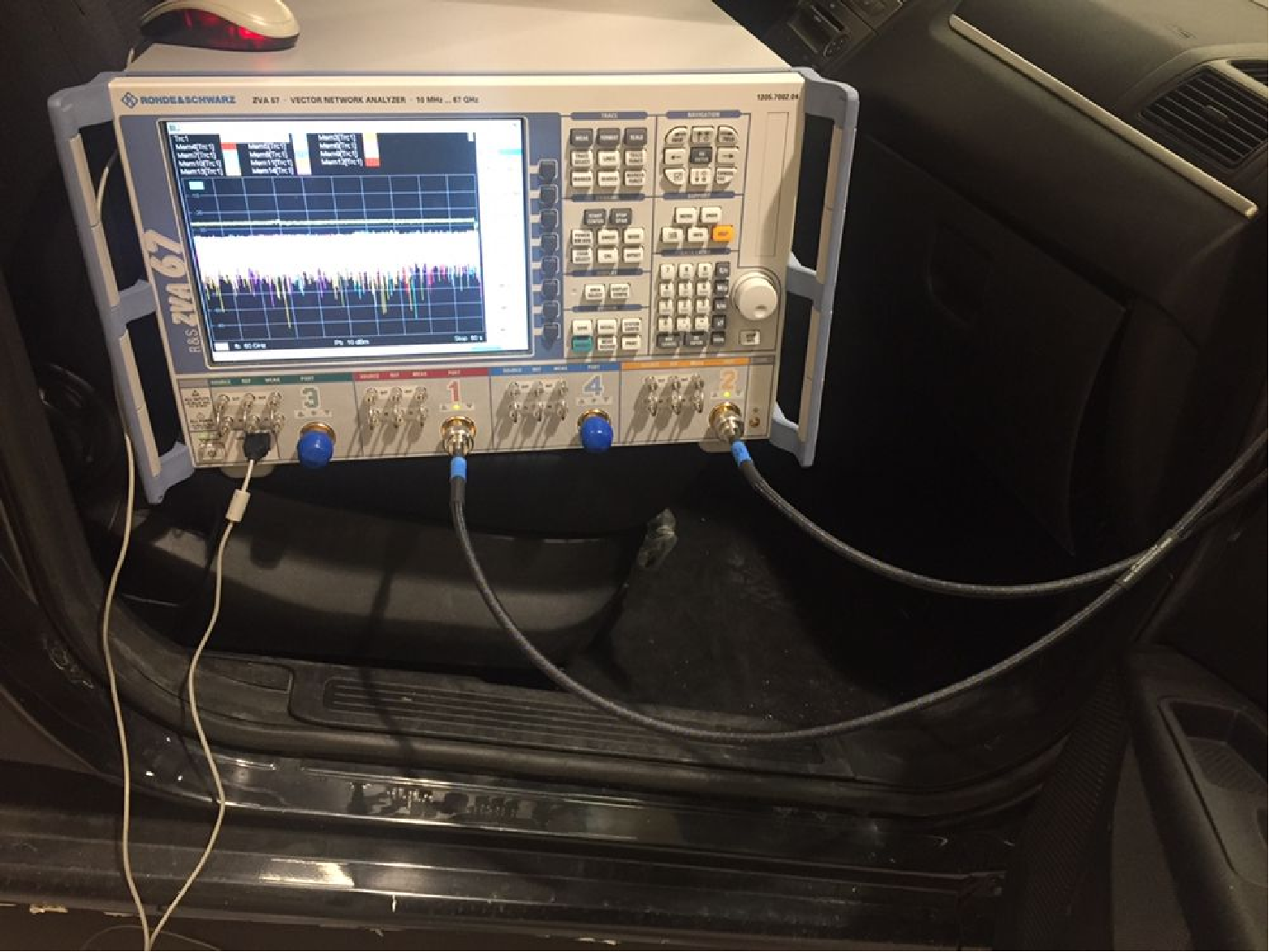}
        }\\ 
    \end{center}
    \caption{Measurement setup with the transmitter (TX) and receiver (RX) antennas located in the engine compartment of Fiat Linea: (a) Engine compartment, and (b) VNA setup.}
   \label{fig:probplotsArima}
\end{figure}

We categorize the samples from receivers Rx$1$ and Rx$2$ into two stationary groups: The first group, $Gr_1$, corresponds to driving on the smooth road, and the second group, $Gr_2$, corresponds to driving on the ramp. The optimum thresholds for the received power samples from receivers Rx$1$ and Rx$2$ are determined for each group of stationary data sequences based on the optimum threshold determination process in Section~\ref{sec:framework_channel}. Accordingly, for the stationary group $1$, the optimum thresholds of Rx$1$ and Rx$2$ are $-7$~dBm and $-6.8$~dBm, respectively, and for the stationary group $2$, the optimum thresholds of Rx$1$ and Rx$2$ are $-8$~dBm and $-27$~dBm, respectively. In the following, we only present the results of $Gr_2$ due to the limited number of pages. However, a similar observation has been made in the sample data of $Gr_1$.

\subsection{Transmission Rate}
\label{sec:performance_rate}
Fig.~\ref{fig:ratesample} shows the transmission rate of BGPD fitted to the filtered i.i.d. samples of the group $1$ at different sample numbers, and error probabilities $\epsilon \in \{10^{-3},10^{-4},10^{-5}\}$ compared to the extrapolation-based results. 
To address the same target error probability $\epsilon$, a higher value of optimum transmission rate is obtained based on the proposed MEVT-based framework, up to $10^{3}$ more than the extrapolation-based benchmark algorithm. The gap between their rates increases as the error probability increases. Additionally, according to the results obtained based on the extrapolation method at $\epsilon = 10^{-3}$, the transmission rate drops significantly after some sample number since $\varepsilon_{n}^{\dagger}$ approaches $0$, meaning that the maximum allowed error probability $\varepsilon_{n}^{\dagger}$ is estimated $0$ instead of $\sim 10^{-3}$ which is the target error probability. Besides, the estimated transmission rate is not a function of the sample number for a large number of samples and varies only as a function of the target error probability $\epsilon$. These results are consistent with the results of the EVT-based rate selection framework for single-input single-output (SISO)-URC presented in \cite{MehrniaTVTRate} and the extrapolation-based rate selection framework in \cite{urllc_05}.

The outage probability of the estimated transmission rate for $R(G_{pp})$ at different target error probability $\epsilon$ has also been computed. The outage probability results are the same for the proposed framework and the traditional extrapolation-based method. However, to experience the same outage probability, the traditional method requires a lower transmission rate which is neither power nor time-efficient. Therefore, by applying the proposed MEVT-based rate selection framework, we achieve a higher transmission rate while addressing the ultra-reliable constraint in which the outage probability does not exceed the target error probability.

\begin{figure}[ht]
\centering
\includegraphics[width=0.99\columnwidth]{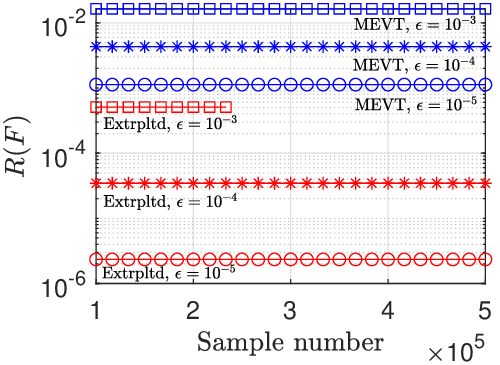}

    \caption{The transmission rate of BGPD fitted to the filtered i.i.d. samples of the group $1$ at different sample numbers, and targeted PER $\epsilon \in \{10^{-5},10^{-4},10^{-3}\}$.}
   \label{fig:ratesample}
\end{figure}

\subsection{Confidence Interval}
\label{sec:performance_ci}
Fig.~\ref{fig:sigmaxici} shows the estimated UGPD parameters along with the lower and upper bounds of CIs for $\alpha= \{0.05,0.2,0.5\}$ and different sample numbers in both receivers Rx$1$ and Rx$2$ for stationary group $1$. Compared to the standard CI results presented in \cite{MehrniaEucnc}, the confidence interval obtained by the non-bootstrap method is wider in general above a certain number of samples, and the estimated GPD parameters are almost constant. Moreover, it is observed that the MLEs of the scale ($\sigma$) and shape ($\xi$) parameters are subject to significant uncertainty when the sample size is limited. Nonetheless, the level of uncertainty reduces progressively with an increase in the sample size. This is consistent with the findings from the standard CI results as reported in \cite{MehrniaEucnc}. This underscores the trade-off between the uncertainty of the GPD parameters and the expenses involved in gathering data. In contrast to the conventional non-parametric technique for rate selection, which mandates noticeably more training on the channel, as outlined in \cite{urllc_05}, our framework based on MEVT appropriately infers the rate with a certain level of assurance, i.e., CI, using a smaller sample size of around $1/\epsilon$.

\begin{figure}[h]
\centering
\captionsetup[subfigure]{labelformat=empty}
     \begin{center}
        \subfloat[(a)]{%
            \label{fig:sigma1ci}
            \includegraphics[width=0.9\columnwidth]{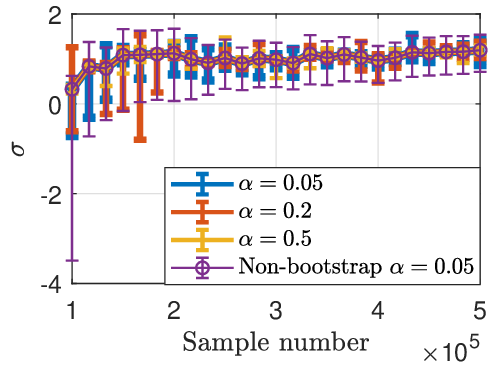}
        }\\%
        \subfloat[(b)]{%
            \label{fig:sigma2ci}
            \includegraphics[width=0.9\columnwidth]{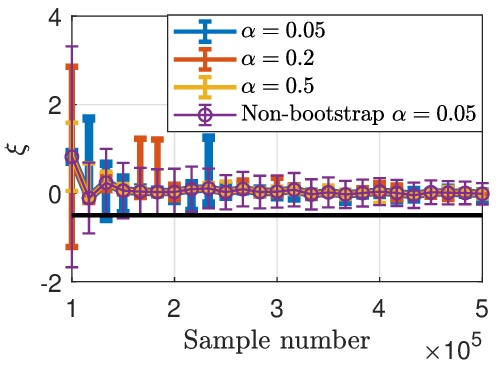}
        }\\
    \end{center}
    \caption{The estimated Pareto parameters along with their CI considering $\alpha=0.05,0.2,0.5$ for stationary group $1$, receivers Rx$1$ and Rx$2$, at different sample numbers: (a) Scale parameter and the corresponding CI for Rx$1$, and (b) Shape parameter and the corresponding CI for Rx$1$; the horizontal black line refer to the $-0.5$ minimum acceptable value for the shape parameter of UGPD.}
   \label{fig:sigmaxici}
\end{figure}

Fig.~\ref{fig:beta} illustrates the empirical CDF of the angular component of Pickands transformation along with the $0.01$ confidence interval (the dotted plots), compared to the estimated Beta distribution fitted to the probability measure function of the Pickands' angular coordinate for two different sample numbers. $\beta_1$ and 
$\beta_{2}$ correspond to the fitted $\beta$-distribution for the sample numbers $n_1 \approx 1.3 \times 10^{5}$ and $n_2 \approx 5 \times 10^{5}$, respectively, where the estimation of UGPD parameters and the corresponding angular component $\omega$ become stable. It is observed that both $\beta_1$ and $\beta_2$ distributions fit the empirical results within the $\alpha = 0.01$ confidence interval. Consequently, utilizing MEVT for modeling the multivariate channel and determining the associated transmission rate can significantly reduce the required number of samples by a minimum of $10$-fold compared to traditional extrapolation and data-driven models.

\begin{figure}[ht]
\centering
\includegraphics[width=0.9\columnwidth]{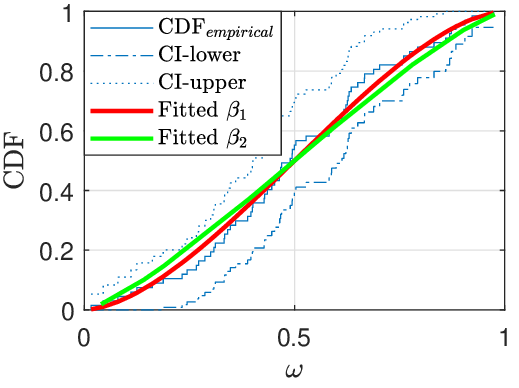}

    \caption{The CDF of pseudo-polar Pickands angular component based on empirical results along with $0.01$ confidence interval, the Beta distribution $\beta_1$ fitted to the sample numbers $\approx 1.3 \times 10^{-5}$, and the Beta distribution $\beta_2$ fitted to the sample number $\approx 5 \times 10^{-5}$.}
   \label{fig:beta}
\end{figure}

Fig.~\ref{fig:ratesampleci} depicts the estimated transmission rate based on the proposed MEVT-base rate selection framework for different sample numbers with the corresponding CI for $\alpha = \{0.05,0.2,0.5\}$ and target error probabilities $\epsilon \in \{10^{-5},10^{-4},10^{-3}\}$. 
The estimated transmission rate is constant with respect to the sample size, the same as what we have observed in Fig.~\ref{fig:ratesample}. The confidence interval is also constant with respect to the sample size except for low sample numbers, where the estimation of UGPD parameters suffers from uncertainty and varies within a wider confidence interval. However, above a certain sample size, the estimation of CIs becomes stable and converges. These observations are in good agreement with what we observed for the SISO system illustrated in \cite{MehrniaEucnc}.

\begin{figure}[ht]
\centering
\includegraphics[width=0.9\columnwidth]{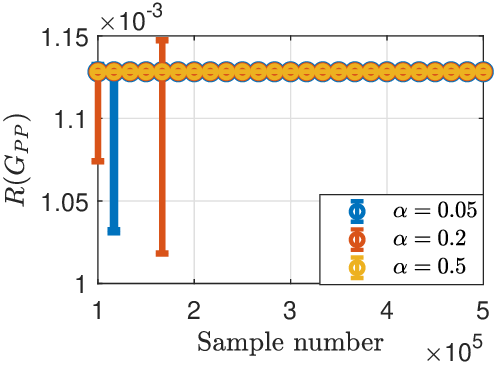}
    \caption{Estimated transmission rate with the corresponding CI at $\alpha = 0.05,0.2,0.5$ at different sample numbers for targeted error probability of $10^{-5}$.}
   \label{fig:ratesampleci}
\end{figure}
\vspace{-0.5cm}
\section{Conclusions}
\label{sec:conclusions}
In this paper, we propose a novel framework based on the multivariate extreme value theory with the goal of determining an optimum transmission rate for a system operating in MIMO-URC, and incorporating a confidence interval for the estimated BGPD parameters and the corresponding optimum transmission rate to enable the usage of MEVT in real-time communication. The framework includes the modeling of the inter-relationship of the tail distributions of the multiple channels by using the BGPD based on the Poisson point process approach, determination of the optimum transmission rate by using the estimated BGPD model, incorporation of the  confidence intervals for the estimated transmission rate due to the availability of the restricted amount of data, and then assessment of the system reliability by utilizing the outage probability metric. The proposed MEVT-based rate selection framework achieves a higher transmission rate, up to about $10^3$, than the traditional methods based on the extrapolation of average statistic channel models.
Incorporating CI into the rate selection estimate can lower the required number of samples, at least by $10$-fold, during the training phase necessary to achieve a specific level of reliability. This reduction in required samples subsequently decreases the sample complexity and contributes to a more efficient rate estimation. 
It has been demonstrated that calculating GPD parameters using limited samples leads to a wider confidence interval in rate estimation, implying a greater degree of uncertainty. Nevertheless, as additional data become accessible, the confidence limits become narrower because of the diminishing level of uncertainty.
Moreover, based on the proposed methodology, we demonstrate the trade-off between the uncertainty in the rate selection and the cost of data collection. 
Since a large amount of data is required for deriving the statistics, our future endeavors encompass the expansion of our research through the implementation of transfer learning techniques, specifically employing knowledge-assisted training. This approach entails the development of a digital twin model that accurately represents a real-time network, incorporating essential elements such as network topology, channel characteristics, and queueing models for offline training purposes. Subsequently, this model is refined and optimized using a smaller dataset in the live operating environment.



\balance 

\ifCLASSOPTIONcaptionsoff
  \newpage
\fi

\bibliographystyle{ieeetr}
\bibliography{MultivariateEVT_RateSelection_CI.bib}

\end{document}